\renewcommand{\vec}[1]{\mathbf{#1}}
\newtheorem{proposition}{Proposition}
\newtheorem{remark}{Remark}
\title{Optimal Mixed-ADC arrangement for DOA Estimation via CRB using ULA}
\name{Xinnan Zhang, Yuanbo Cheng, Xiaolei Shang and Jun Liu \thanks{This work was supported in part by the NSFC under award number 62271461, the Youth Innovation Promotion Association CAS (CX2100060053) and the Anhui Provincial Natural Science Foundation under Grant 2208085J17.}}
\address{Department of EEIS, University of Science and Technology of China, Hefei, Anhui, China\\
Emails: $\{$zhangxinnan,cyb967,xlshang$\}$@mail.ustc.edu.cn; junliu@ustc.edu.cn
}
\begin{document}
%
\maketitle
\begin{abstract}
We consider a mixed analog-to-digital converter (ADC) based architecture
for direction of arrival (DOA) estimation using a uniform linear array (ULA). We derive the Cram{\'e}r-Rao bound (CRB) of the DOA under the optimal time-varying threshold, and find that the asymptotic CRB is related to the arrangement of high-precision and one-bit ADCs for a fixed number of ADCs. Then, a new concept called ``mixed-precision arrangement" is proposed. It is proven that better performance for DOA estimation is achieved when high-precision ADCs are distributed evenly around the edges of the ULA. This result can be extended to a more general case where the ULA is equipped with various precision ADCs. Simulation results show the validity of the asymptotic CRB and better performance under the optimal mixed-precision arrangement.
\end{abstract}
\begin{keywords}
Cram{\'e}r-Rao bound (CRB), direction of arrival (DOA), mixed-ADC based architecture, mixed-precision arrangement, uniform linear array (ULA).
\end{keywords}

\section{Introduction}

The problem of direction of arrival (DOA) estimation is of great importance in the field of array signal processing with many applications in automotive radar, sonar, wireless communications \cite{tuncer2009classical,sun2020mimo,krim1996two}. 

To ensure accuracy, high-precision analog-to-digital converters (ADCs) are often employed. However, the power consumption and hardware cost of ADC increase exponentially as the quantization bit  and sampling rate grow \cite{walden1999analog}. Using one-bit ADCs is a promising technique  \cite{1039405} to mitigate the aforementioned ADC problems. Recently, one-bit sampling based on time-varying threshold schemes has been considered in \cite{8822763, 8291043,9930674,eamaz2021modified}, which can eliminate ambiguity between the signal amplitude and noise variance. Nevertheless, the pure one-bit ADC system suffers from lots of problems like large rate loss especially in high signal-to-noise ratio (SNR) regime \cite{mo2014high} and dynamic range problem, i.e., a strong target can mask a weak target \cite{walden1999analog}.

Instead, a mixed-ADC based architecture has been proposed in \cite{liang2016mixed} to overcome the above weakness, where most receive antenna outputs are sampled by one-bit ADCs and a few outputs are sampled by high-resolution ADCs. Under the mixed-ADC architecture, some works have been proposed to analysis the DOA performance loss under uniform linear arrays (ULAs)  \cite{shi2022performance} and Cram{\'e}r-Rao bound (CRB) in phase-modulated continuous-wave multiple-input multiple-output radar \cite{shang2022mixed,10027928}. However, they used the mixed-ADC architecture by simply employing high-precision ADCs on one side and one-bit ADCs on the other side (see  Fig. 1(a) for an example), which wastes some potentialities of the mixed-ADC based architecture. 
\begin{figure}[htbp]\label{fig:arr}
\centering
\includegraphics[scale=0.14]{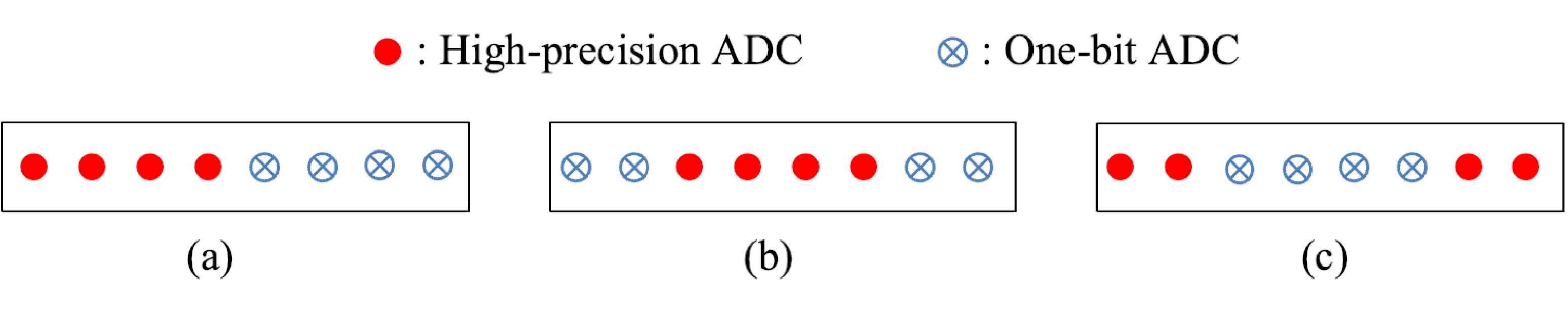}
\caption{Different mixed-ADC arrangements using ULA}
\label{figure}
\end{figure}

In this work, we derive the CRB associated with DOA under the mixed-ADC based architecture with time-varying threshold. For computational simplicity, we consider the asymptotic CRB and find it is related to the arrangement of high-precision and one-bit ADCs. Furthermore, we propose a new concept named ``mixed-precision arrangement''. It is found that the high-precision ADCs should be arranged evenly around the edges of the ULA to achieve a lower CRB like the case in Fig. 1(c). Numerical results demonstrate that the asymptotic CRB is valid and the optimal mixed-precision arrangement can achieve better performance.

\indent \textit{Notation:} We denote vectors and matrices by bold lowercase and uppercase letters, respectively.  $(\cdot)^T$ and $(\cdot)^H$ represent the  transpose and the conjugate transpose, respectively. $\vec{I}_N$ denotes an $N\times N$ identity matrix and $\mathbf{1}_N=[1, \ldots, 1]^T \in \mathbb{R}^{N \times 1}$.
$\otimes$, $\odot$ and $\circ$ denote the  Kronecker, Khatri–Rao and Hadamard matrix products, respectively. $\rm{vec}(\cdot)$  refers to the column-wise vectorization
operation and $\rm{diag}(\vec{d})$ denotes a diagonal matrix with diagonal
entries formed from $\vec{d}$. 
$\mathbf{A}_{\mathrm{R}}\triangleq \Re\{\mathbf{A}\}$ and
$\mathbf{A}_{\mathrm{I}} \triangleq \Im\{\mathbf{A}\}$, where $\Re\left\{ \cdot\right\}$ and $\Im\left\{ \cdot\right\}$ denote the real and imaginary parts,  respectively. $\mathrm{sign}(\cdot)$ is the sign function applied element-wise to vector or matrix and $\lfloor \cdot \rfloor$ is the floor function. Finally, $j\triangleq \sqrt{-1}$.

\section{Signal Model}
We consider $K$ narrowband far-field signals impinging on a ULA with $M$ elements from different directions $\{\theta_1, \dots, \theta_K\} $. After sampling, the array output can be stacked over the whole $N$ snapshots as
\begin{align}
    \mathbf{X} = \mathbf{A}\mathbf{S}+\mathbf{E},
\end{align}
where $\quad$ $\mathbf{X} = [\mathbf{x}(1), \mathbf{x}(2), \dots, \mathbf{x}(N)] \in \mathbb{C}^{M \times N}$ $\quad$ is the received signal matrix, $\mathbf{A}=\left[\mathbf{a}\left(\theta_1\right), \cdots, \mathbf{a}\left(\theta_K\right)\right] \in \mathbb{C}^{M \times K}$ represents the array steering matrix with $\mathbf{a}(\theta_k)$ denoting the steering vector associated with $k$th source, $\mathbf{S}=[\mathbf{s}(1), \mathbf{s}(2), \dots, \mathbf{s}(N)] \in \mathbb{C}^{ K \times N}$ 
denotes the source signal matrix, and $\mathbf{E}=[\mathbf{e}(1), \mathbf{e}(2), \dots, \mathbf{e}(N)] \in \mathbb{C}^{ M \times N}$ is the noise sequence. The noise has the zero-mean circularly symmetric complex-valued white Gaussian distribution with independent and identically distributed (i.i.d.) known variance $\sigma^2$. Under the case of ULA with half-wavelength antenna spacing , the steering vector $\mathbf{a}(\theta_k)$ can be written as 
\begin{align}
    \vec{a}(\theta_k)&=\begin{bmatrix} 1, e^{j\pi \sin\theta_k},  \dots, e^{j\pi(M-1)\sin\theta_k}\end{bmatrix}^T.
\end{align}
The source signal matrix $\mathbf{S}$ is assumed to be deterministic but unknown, which is referred to as the conditional or deterministic model \cite{60109}.

When one-bit ADC is employed with time-varying threshold for quantization, the array output is modified as
\begin{align}
    \mathbf{Z} = \mathcal{Q}(\mathbf{X} - \mathbf{H}),
\end{align}
where $\mathbf{H} \in \mathbb{C}^{M \times N}$ represents the known threshold and $\mathcal{Q}(\cdot) = \mathrm{sign}(\Re\{\cdot\})+j\mathrm{sign}(\Im\{\cdot\})$ denotes the complex one-bit quantization operator.

We consider a mixed-ADC based architecture  equipped with $M_0$ high-resolution ADCs and  $M_1$  one-bit ADCs, where $M_0 + M_1 = M$. More generally, we define a high-precision ADC indicator vector $\bm{\delta} = [\delta_1, \dots, \delta_M]^T$ with $\delta_i \in \{0, 1\}$, which means that the $i$th antenna is equipped with high-precision ADC when $\delta_i = 1$ and  one-bit ADC when $\delta_i = 0$. So the mixed output can be represented as
\begin{equation}
   \mathbf{Y} = \mathbf{Z} \circ (\bm{\bar{\delta}} \otimes \vec{1}_N^T) + \mathbf{X} \circ (\bm{\delta} \otimes \vec{1}_N^T),
\end{equation}
where  $\bm{\bar{\delta}} = \bold 1_M - \bm{\delta}$ is the indicator for one-bit ADC.


\section{Cramér-Rao Bounds for Mixed Data}

Let $\bm{\varphi}$ collect all the real-valued unknown target parameters, i.e., $\bm{\varphi}=\begin{bmatrix} \bm{\theta}^T,\ \vec{s}_{\rm R}^T, \ \vec{s}_{\rm I}^T \end{bmatrix}^T\in \mathbb{R}^{(K+2KN)\times 1}$, where $ \vec{s} = \rm{vec}(\mathbf{S})$. 

In \cite{shang2022mixed}, it is proved that the Fisher information matrix (FIM) for mixed-ADC data is the summation of the FIMs for the high-precision data and one-bit data, i.e., 
\begin{align}\label{eq:6}
    \mathbf{F}_m(\bm{\varphi}) = \mathbf{F}_0(\bm{\varphi}) + \mathbf{F}_1(\bm{\varphi}),
\end{align}
where $\mathbf{F}_0(\bm{\varphi})$ and $\mathbf{F}_1(\bm{\varphi})$ are FIMs for high-precision data and one-bit data, respectively.

Let the derivatives of the high-precision and one-bit data  with respect to $\bm{\varphi}$ denote as:
\begin{align}
    \mathbf{U}_0 &=\mathbf{U} \text{diag}\left(\mathbf{1}_N \otimes \bm{\delta}\right), \nonumber\\
    \mathbf{U}_1 &=\mathbf{U} \text{diag}\left(\mathbf{1}_N \otimes \overline{\bm{\delta}}\right),
\end{align}
respectively, where 
\begin{align}
    &\mathbf{U}=\left[\mathbf\Delta, \quad\mathbf{G}, \quad j \mathbf{G}\right ]^{H}, \\
    &\mathbf\Delta =\mathbf{S}^T \odot\dot{\mathbf{A}}, \quad \mathbf{G} = \mathbf{I}_N \otimes \mathbf{A}, \\
    &\dot{\mathbf{A}}=\left[\frac{\partial \mathbf{a}\left(\theta_1\right)}{\partial \theta_1}, \ldots, \frac{\partial \mathbf{a}\left(\theta_K\right)}{\partial \theta_K}\right] .
\end{align}
From \cite{stoica2005spectral,li2018bayesian} and (\ref{eq:6}), the FIM for mixed data can be written as
\small
\begin{align}
    \mathbf{F}_{m}(\bm{\varphi}) = &\frac{2}{\sigma^{2}} \Re\left\{\mathbf{U}_{0} \mathbf{U}_{0}^{H}\right\} 
    \nonumber \\ &+
        \frac{1}{\pi \sigma^{2}}\left(\mathbf{U}_{1, \rm R} \boldsymbol{\Lambda}_{\rm R} \mathbf{U}_{1, \rm R}^{T}+\mathbf{U}_{1, \rm I} \boldsymbol{\Lambda}_{\rm I} \mathbf{U}_{1, \rm I}^{T}\right),
\end{align}
\normalsize
where $\mathbf{\Lambda}=\text{diag}([\lambda_1, \ldots, \lambda_{MN}])$ . The diagonal element $\lambda_k$ in $\mathbf{\Lambda}$ is given by
\begin{align}
    \lambda_{k}=B\left(\frac{\Re\left(\zeta_{k}\right)}{\sigma / \sqrt{2}}\right)+j B\left(\frac{\Im\left(\zeta_{k}\right)}{\sigma / \sqrt{2}}\right),
\end{align}
where $\zeta_k$ is the $k$th element in  $\bm{\zeta}=\text{vec}(\mathbf{A}\mathbf{S}-\mathbf{H}) \in \mathbb{C}^{MN \times 1}$ and the function $B(\cdot)$ is defined by
\begin{align}
    B(x)=\left[\frac{1}{\Phi(x)}+\frac{1}{\Phi(-x)}\right] e^{-x^{2}}
\end{align}
with $\Phi(x)=\int_{-\infty}^x \frac{1}{\sqrt{2 \pi}} e^{-\frac{t^2}{2}} d t$ being the cumulative distribution function of the normal standard distribution.

    Considering the optimal time-varying threshold, i.e.,  $\mathbf{H} = \mathbf{A}\mathbf{S}$, the FIM for mixed-ADC data can be simplified as
    \small
    \begin{align}
    \mathbf{F}_{m}(\bm{\varphi}) &= \frac{2}{\sigma^{2}} \Re\left\{\mathbf{U}_{0} \mathbf{U}_{0}^{H}\right\}+\frac{4}{\pi \sigma^{2}}\left(\mathbf{U}_{1, \rm R} \mathbf{U}_{1, \rm R}^{T}+\mathbf{U}_{1, \rm I} \mathbf{U}_{1, \rm I}^{T}\right)\nonumber \\ 
    &=\frac{2}{ \sigma^{2}}(\mathbf{U}_{\rm R}\bar{\mathbf {\Sigma}} \mathbf{U}_{\rm R}^T +\mathbf{U}_{\rm I}\bar{\mathbf {\Sigma}} \mathbf{U}_{\rm I}^T),
    \end{align}
    \normalsize
    where 
    \begin{align}
        \bar{\bm {\Sigma}} = \mathbf{I}_N\otimes \mathbf{\Sigma}_0, \quad
        \bm{\Sigma}_0 = \left(1-\frac{2}{\pi}\right) \text{diag}(\bm{\delta} )+\frac{2}{\pi}\mathbf{I}_M.
    \end{align}
    Let $\bar{\mathbf{U}} = \mathbf{U}\bar{\mathbf {\Sigma}}^{\frac{1}{2}}$,  we have $\bar{\mathbf{U}}=[\bar{\bold\Delta}, \quad \bar{\mathbf{G}}, \quad j \bar{\mathbf{G}}]^H$, where 
    \begin{align}
        \bar{\mathbf{\Delta}} = \bar{\mathbf {\Sigma}}^{\frac{1}{2}}\mathbf{\Delta}
       ,\quad \bar{\mathbf{G}} =\mathbf{I}_N\otimes (\mathbf{\Sigma}_0^{\frac{1}{2}} \mathbf{A}). \label{eq:22}
    \end{align}
    Finally, we can concisely write the FIM for mixed-ADC data as
    \begin{align}
    \mathbf{F}_{m}(\bm{\varphi}) 
    &=\frac{2}{\sigma^2} \Re\left\{\bar{\mathbf{U}} \bar{\mathbf{U}}^{H}\right\}.
    \end{align}
    
    Following the approach given in \cite{stoica2005spectral}, we can obtain  the DOA-related block of the deterministic CRB by block-wise inversion, i.e., 
    \small
    \begin{align} \label{eq:24}
        \vec{CRB}(\mathbf{\theta})=\frac{\sigma^{2}}{2}\Re\left\{\bar{\mathbf\Delta}^{H} \mathbf\Pi_{\bar{\mathbf{G}}}^{\perp} \bar{\mathbf\Delta}\right\}^{-1},
    \end{align}
    \normalsize
    where $\bold\Pi_{\bar{\mathbf{G}}}^{\perp}=\mathbf{I}-\bar{\mathbf{G}}\left(\bar{\mathbf{G}}^H \bar{\mathbf{G}}\right)^{-1} \bar{\mathbf{G}}^H$ is the orthogonal projector onto the null space of $\bar{\mathbf{G}}^H$. Substituting  (\ref{eq:22}) into (\ref{eq:24}), we have a more clear expression for the CRB:
    \small
    \begin{align}\label{eq:25}
     \vec{CRB}(\mathbf{\theta}) 
    =\frac{\sigma^2}{2N}\Re\left\{\left(\dot{\mathbf{A}}^H \mathbf{\Omega} \dot{\mathbf{A}}\right) \circ \hat{\mathbf{P}}^T\right\}^{-1},
    \end{align}
    \normalsize
    where 
    \small
    \begin{align}
        \hat{\mathbf{P}}&=\frac{1}{N} \sum_{t=1}^N \mathbf{s}(t) \mathbf{s}^H(t),\\
        \mathbf{\Omega}
        &=\bold\Sigma_0-\bold\Sigma_0\mathbf{A}(\mathbf{A}^H\bold\Sigma_0\mathbf{A})^{-1}\mathbf{A}^H\bold\Sigma_0.
    \end{align}
 \normalsize

    \subsection{CRB for $K=1, N=1$}
    When the target and snapshot are both single, the  CRB associated with DOA is given as
    \small
    \begin{align} \label{eq:30}
        \mathrm{CRB}(\theta)=\frac{\sigma^2(M_0+\frac{2}{\pi}M_1)}{2pS\pi^2\cos^2\theta } = \frac{M_0+\frac{2}{\pi}M_1}{2\pi^2S} \frac{1}{\text{SNR}\cos^2\theta},
    \end{align}
    \normalsize
    where $p$ is the signal power, $\text{SNR} =\frac{p}{\sigma^2}$ and
    \small
    \begin{align} \label{eq:31}
        S = \sum_{i=1}^Mg_i (i-1)^2\sum_{i=1}^Mg_i-\left[\sum_{i=1}^Mg_i (i-1)\right]^2,
    \end{align}
    \normalsize
    in which $g_i \in\{1, \frac{2}{\pi}\}$, $\sum_{i=1}^M g_i=M_0+\frac{2}{\pi}M_1$, and $M_0+M_1 = M$.
    
    \subsection{Asymptotic CRB}
    For sufficiently large $N$, the estimated power $\hat{\mathbf{P}}$ can be replaced by the true power $\mathbf{P}$. Thus, the CRB is given by
    \label{eq:32}
    \small
    \begin{align}
        \vec{CRB}(\mathbf{\theta})  =\frac{\sigma^2}{2N}\Re\left\{\left(\dot{\mathbf{A}}^H \mathbf{\Omega} \dot{\mathbf{A}}\right) \circ \mathbf{P}^T\right\}^{-1}.
    \end{align}
    \normalsize
    
    When $M$ is sufficiently large, and the number of high-precision ADC is assumed to be constrained (i.e., $M_1$ does not increase proportionally with $M$). We can obtain the following asymptotic result:
    \small
    \begin{align} \label{eq:33}
        \vec{CRB}(\mathbf{\boldsymbol{\theta}}) = \frac{M_0+\frac{2}{\pi}M_1}{2\pi^2 NS}\left[\begin{array}{lll}
            \frac{1}{\text{SNR}_1\cos^2\theta_1} & & 0 \\
            & \ddots & \\
            0 & & \frac{1}{\text{SNR}_K\cos^2\theta_K}
            \end{array}\right],
    \end{align}
    \normalsize
    where $\text{SNR}_i$ is the signal-to-noise ratio for the $i$th signal. This result is motivated by (\ref{eq:30}) and \cite{stoica1989music}, and its detailed derivation can be seen in \cite{zhangmixed}.
    
    It is worth mentioning that the CRB in (\ref{eq:30}) and (\ref{eq:33}) for the mixed-ADC associated with DOA is concerned with the arrangement of the high-precision and one-bit ADCs (i.e., how to maximize $S$), which is rarely mentioned  in the previous. We will study the problem in Section \ref{sec:5}.

\section{Analysis of Arrangement in the Mixed-ADC Based Architecture}\label{sec:5}
Considering the problem mentioned in the last section, the CRB above is achievable using maximum likelihood estimation. So we can get better estimation for DOA if we properly design the arrangement of  high-precision and one-bit ADCs. Intuitively, using Lagrange's identity, we can reformulate the optimization problem as
\small
\begin{align}\label{eq:34} 
    &\max_{\{g_i\}_{i=1,2\cdots,M}} \quad S=\sum_{i=1}^M\sum_{j>i}g_i g_j(j-i)^2 \nonumber \\ 
    & \ \ {\rm s.t.} \quad  g_i \in\{1, \frac{2}{\pi}\}, \quad i=1,2,\dots,M, \nonumber\\
    &  \qquad \quad \  \sum_{i=1}^M g_i=M_0+\frac{2}{\pi}M_1, 
\end{align}
\normalsize
which is called the mixed-precision arrangement problem.
\begin{proposition}\label{pro:1}
The solution to (\ref{eq:34}) is that the high-precision ADCs are placed evenly around the edges of the ULA.
\end{proposition}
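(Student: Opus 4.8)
The plan is to turn (\ref{eq:34}) into a clean quadratic maximization in the indicator $\bm{\delta}$ and then dispose of it by an exchange argument followed by an induction. First I would recenter the array: put $y_i = i-\tfrac{M+1}{2}$, so that $\sum_{i=1}^{M}y_i=0$ and $y_i^2$ is largest at the two edges of the ULA and decreases monotonically toward the centre. Since $S$ depends on the array only through the differences $j-i=y_j-y_i$, it is translation invariant, and Lagrange's identity gives $S=\big(\sum_i g_i\big)\big(\sum_i g_i y_i^2\big)-\big(\sum_i g_i y_i\big)^2$. Substituting $g_i=\tfrac{2}{\pi}+c\,\delta_i$ with $c:=1-\tfrac{2}{\pi}>0$, the constraint fixes $T:=\sum_i g_i=M_0+\tfrac{2}{\pi}M_1$, and using $\sum_i y_i=0$ one gets $S=\mathrm{const}+c\big(T\,V(\bm{\delta})-c\,U(\bm{\delta})^2\big)$, where $V(\bm{\delta})=\sum_i\delta_i y_i^2\ge 0$ and $U(\bm{\delta})=\sum_i\delta_i y_i$. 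Hence (\ref{eq:34}) is equivalent to maximizing $\Phi(\bm{\delta})=T\,V(\bm{\delta})-c\,U(\bm{\delta})^2$ over $\bm{\delta}\in\{0,1\}^M$ with $\sum_{i=1}^M\delta_i=M_0$. I would also record at this point that $\Phi$ is invariant under the reflection $i\mapsto M+1-i$ (which sends $y_i\mapsto-y_i$, so $V$ and $U^2$ are unchanged), and that the elementary inequalities $T-c=M_0-1+\tfrac{2}{\pi}(M_1+1)>0$ and $T-cM_0=\tfrac{2}{\pi}M>0$ hold.

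Next I would take an arbitrary maximizer, with selected set $D=\{i:\delta_i=1\}$, and use that no single swap improves $\Phi$. Replacing a selected index $p$ by an unselected index $p'$ changes the objective by $\Delta\Phi=(y_{p'}-y_p)\big[(T-c)y_{p'}+(T+c)y_p-2cU\big]$, which must therefore be $\le 0$. Applying this with $p$ the leftmost, respectively rightmost, selected index and $p'$ its outward neighbour shows that $D$ cannot miss both extreme positions $1$ and $M$. To upgrade this to "$D$ contains both endpoints'' when $M_0\ge 2$, suppose say $M\in D$ but the leftmost selected index $p_L$ has $p_L\ge 2$: the outward-neighbour swap at $p_L$, combined with $U\ge M_0\min_{i\in D}y_i=M_0 y_{p_L}$ and $T>cM_0$, forces $y_{p_L}>0$; then the long swap of $p_L$ with position $1$ must also be non-improving, i.e. $(T-c)y_1+(T+c)y_{p_L}-2cU\ge 0$, whereas a short case analysis on the sign of $T+c-2cM_0$ (using $y_1=-y_M$, $0<y_{p_L}\le y_M$, $U\ge M_0 y_{p_L}$ and $T>c$) shows this quantity is strictly negative for $M_0\ge 2$ — a contradiction. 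By the reflection symmetry the same conclusion holds with the roles of $1$ and $M$ interchanged, so both endpoints lie in $D$. The cases $M_0\le 1$ are immediate, since there $\Phi$ is a positive multiple of $V$, maximized at an extreme position.

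Finally I would peel off the two endpoints and induct. With $1,M\in D$, writing $D'=D\setminus\{1,M\}\subseteq\{2,\dots,M-1\}$ and using $y_1=-y_M$ gives $V=2y_M^2+V'(D')$ and $U=U'(D')$, whence $\Phi=2Ty_M^2+\big(T\,V'(D')-c\,U'(D')^2\big)$; moreover recentering the sub-array $\{2,\dots,M-1\}$ does not move its centre, so its centered coordinates are exactly the old $y_i$'s. Thus $D'$ maximizes a functional of exactly the same shape on a ULA of $M-2$ elements carrying $M_0-2$ high-precision ADCs, with the same constants and the same sign conditions, so by induction on $M_0$ (base cases $M_0\le 1$) the set $D'$ — and hence $D$ — consists of the extreme positions, split as evenly as possible between the two ends, which is the claimed arrangement; when $M_0$ is odd the two balanced splits give the same value of $\Phi$ and are both optimal. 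Since the argument uses only that higher precision means a larger weight $g_i$ together with positivity of $T$ and $c$, the same proof yields the multi-precision extension. I expect the main obstacle to be exactly the step that forces both endpoints into the optimal set: a single non-improving swap pins down only one endpoint, and excluding a configuration lopsidedly clustered on one side requires chaining an adjacent-neighbour swap with a long swap and invoking $T>cM_0$ to reach the contradiction.
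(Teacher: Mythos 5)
Your proof is correct, and while it shares the paper's underlying engine (an exchange/swap argument on the quadratic form $S$), the execution is genuinely different. The paper works directly with $S$ in the original coordinates, picks a very specific swap pair (the first one-bit ADC within distance $M_h=\lfloor(M_0+1)/2\rfloor$ of an edge, exchanged with the first high-precision ADC inward of it), computes $\tilde S-S=(1-\tfrac{2}{\pi})(n-m)H(m,n)$, and establishes $H(m,n)>0$ by a chain of monotonicity inequalities on a surrogate $H'(m,n)$; optimality then follows by iterating improving swaps until none remain. You instead recenter to $y_i=i-\tfrac{M+1}{2}$ and use translation invariance plus $\sum_i y_i=0$ to reduce the problem to maximizing $\Phi=TV-cU^2$, which cleanly exposes the structure (the $V$ term pushes selected antennas to the edges, the $U^2$ term penalizes imbalance); you then apply local swap-optimality at a maximizer only to force both endpoints into the selected set, and finish by peeling the endpoints and inducting on $M_0$. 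I checked your swap identity $\Delta\Phi=(y_{p'}-y_p)\left[(T-c)y_{p'}+(T+c)y_p-2cU\right]$, the endpoint-forcing argument (including the use of $U\ge M_0y_{p_L}$, $T-cM_0=\tfrac{2}{\pi}M>0$ and the two-case bound showing the long swap would be improving), and the peeling step $\Phi=2Ty_M^2+\bigl(TV'-cU'^2\bigr)$ with the subarray center unchanged; all are sound, and the sign conditions $T>c>0$, $T>cM_0'$ persist down the induction since $T,c$ are fixed and $M_0$ only decreases. What each approach buys: the paper's version is a constructive improvement procedure with an explicit swap schedule, whereas yours yields a cleaner characterization of maximizers, makes the ``as evenly as possible'' split (and the tie when $M_0$ is odd) fall out automatically, and makes the multi-precision extension in the paper's Remark essentially immediate since only $c>0$ and the sign of $T-cM_0$ are used. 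Minor polish: state the degenerate cases $M_0\in\{0,M\}$ and $M_1=0$ explicitly, and phrase the induction hypothesis as a statement about maximizing $TV-cU^2$ for arbitrary constants satisfying your sign conditions, so that the fixed $T$ carried into the subproblem is legitimate.
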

\begin{proof}
We adopt the strategy swapping the positions of the high-precision and one-bit ADCs to achieve the optimal arrangement. Generally, let $\tilde{S}(\tilde{g}_m = 1, \tilde{g}_n = \frac{2}{\pi},\dots)$ denote  the arrangement that the positions of the $m$th and $n$th ADCs are swapped based on $S(g_m = \frac{2}{\pi}, g_n = 1,\dots)$ and $M_h = \lfloor \frac{M_0+1}{2} \rfloor$. We consider two special cases. The first case is $m \leq M_h$ and $m$ is the first one-bit ADC's index from left to right. The second is $m \geq M-M_h+1$ and $m$ is the first one-bit ADC's index from right to left. In both cases, $n$ is the index of the first high precision ADC appears from the $m$th to the other side.

Note that $S$ can be written as
\small
\begin{align}
   S=& g_m\sum_{j=1, j\neq m,n}^Mg_j(j-m)^2 + g_n\sum_{j=1, j\neq m,n}^Mg_j(j-n)^2 \nonumber\\
   & + g_mg_n(m-n)^2+C,
\end{align}
\normalsize
where the constant $C$ is not related to $m$ or $n$. Similarly,  we have
\small
\begin{align}
    \tilde{S} - S 
     = \left(1-\frac{2}{\pi}\right)(n-m)\sum_{j=1, j\neq m,n}^Mg_j(2j-m-n).
\end{align}
\normalsize

Let $H(m, n)  \triangleq \sum_{j=1, j\neq m,n}^Mg_j(2j-m-n)$. Note that  $H(m,n)$ has no change if we let $g_m = g_n = 1$. It can be expressed as
\small
\begin{align}
    &H(m, n) = 2\sum_{j=1}^Mg_j j - (m+n)\sum_{j=1}^Mg_j \nonumber\\
    & \geq 2\sum_{j=1}^mj +\frac{4}{\pi}\sum_{j=m+1}^{n-1}j + 2\sum_{j=n}^{n+M_0-m}j \nonumber\\
    &+\frac{4}{\pi}\sum_{j=n+M_0-m+1}^Mj -(m+n)\sum_{j=1}^Mg_j  \triangleq H^{\prime}(m,n).
\end{align}
\normalsize

In the first case, using the facts $ 0<m \leq M_h$ and $m<n \leq M_h+M_1$, we obtain that 
\small
\begin{align}
    &H^{\prime}(m,n) - H^{\prime}(m+1,n) \nonumber\\
    & =(1-\frac{2}{\pi})(2n+3M_0-4m-1)+\frac{2}{\pi}M \nonumber
    \\ &\geq (1-\frac{2}{\pi})(3M_0-2M_h+1)+\frac{2}{\pi}M > 0,
\end{align}
\normalsize
and
\small
\begin{align}
    &H^{\prime}(m,n) - H^{\prime}(m,n+1) \nonumber\\
    &= (1-\frac{2}{\pi})(2m-M_0+1)+\frac{2}{\pi}M
    \nonumber \\ &> \frac{2}{\pi}M-(1-\frac{2}{\pi})M_1> (1-\frac{2}{\pi})M_0 >0.
\end{align}
\normalsize
Combining the above inequality,  we have
\small
\begin{align} \label{eq:67}
    &H(m,n) \geq H^{\prime}(m,n) \geq H^{\prime}(M_h, M_h+M_1) \nonumber \\
    & = M(M_0-2M_h+1)+\frac{2}{\pi}(2M_h+M_1) \nonumber \\
    & = \left\{\begin{array}{rcl}
        (1+\frac{2}{\pi})M & & M_0 \quad \text{is even}\\
        \frac{2}{\pi}(M+1) & & M_0 \quad \text{is odd}
        \end{array} \right. > 0.
\end{align}
\normalsize
So it is proved that $\tilde{S} > S$ in the first case.

In the second case, by letting $j^{\prime} = M+1-j$, we have
\small
\begin{align}
    H(m, n) &= 2\sum_{j=1}^Mg_{j}(M+1-j')-(2M+2-m^{\prime}-n^{\prime})\sum_{j=1}^Mg_j \nonumber\\
    & = (m^{\prime}+n^{\prime})\sum_{j=1}^Mg_j-2\sum_{j=1}^Mg_{j}j^{\prime}.
\end{align}
\normalsize
Due to the fact $m^{\prime} \leq M_h$, it is same as the first case. In summery, we have  $\tilde{S} > S$ in both cases.

Therefore,  we can adjust the positions of high-precision and one-bit ADCs in steps to maximize $S$ until none of the above situations exist. Finally, the high-precision ADCs are distributed evenly around the edges of the ULA like Fig. 1(c), which is the optimal mixed-precision arrangement.
\end{proof}

\begin{remark}
Note that Proposition \ref{pro:1} can be extended to a more complex scenarios that the ULA is equipped with various precision ADCs. Combining the fact that the CRB decreases as the quantization becomes increasingly finer \cite{9664619}, the higher precision ADCs should be placed from center as far as possible to achieve better performance (see more details in \cite{zhangmixed}).
\end{remark}

\section{Simulation and Discussion}
In this section, we present numerical examples to demonstrate the effectiveness of the asymptotic CRB and the optimal mixed-precision arrangement. Assuming the ULA spaced at $d = \frac{1}{2}\lambda$ with $M = 30$ and $M_0 = 10$, we consider three situations showed in  Fig. \ref{fig:arr} for the mixed-ADC based architecture:
\small
\begin{enumerate}
    \item $\{\delta_i = 1\}_{i=1}^{10}$ and $\{\delta_i = 0\}_{i=11}^{30}$ like Fig. 1(a);
    \item $\{\delta_i = 0\}_{i=1}^{10}$, $\{\delta_i = 1\}_{i=11}^{20}$ and $\{\delta_i = 0\}_{i=21}^{30}$ like Fig. 1(b);
    \item $\{\delta_i = 1\}_{i=1}^{5}$, $\{\delta_i = 0\}_{i=6}^{25}$ and $\{\delta_i = 1\}_{i=26}^{30}$, which is the optimal mixed-precision arrangement like Fig. 1(c) that high-precision ADCs are distributed evenly around the edges.
  \end{enumerate}  
  \normalsize
   
    We consider two targets with $\theta_1 = 10^{\circ}, \theta_2 = 20^{\circ}, p_1=p_2 =1$. For the one-bit ADC system and mixed-ADC based architecture, the time-vary threshold has the real and imaginary parts selected randomly and equally likely from a predefined eight-element set $\{-h_{\text{max}},-h_{\text{max}}+\Delta,\dots,h_{\text{max}}-\Delta, h_{\text{max}}\}$ with $h_{\text{max}}=2$ and $\Delta=\frac{h_{\text{max}}}{7}$.
    
    Fig. \ref{fig} shows the CRBs versus $N$ and SNR for $\theta_1$ where situations 1, 2 and 3 are denoted as ``Mixed-ADC1", ``Mixed-ADC2" and ``Mixed-ADC3", respectively. Compared with the one-bit system, the mixed-ADC based architectures can achieve significant performance improvements, especially for the ``Mixed-ADC3''  with large SNR.
    
    When $\text{SNR} = -20$ dB, the optimal threshold can be closed to that by using time-varying threshold scheme. Fig. \ref{fig:1} shows that the CRBs for mixed-ADC data almost coincide with the asymptotic CRB. Also, the optimal mixed-precision arrangement has lower CRB than others. It is shown in Fig. \ref{fig:2} that when the $\text{SNR}$ changes, the optimal mixed-precision arrangement has better performance than 
   others on both asymptotic CRB and actual CRB with time-varying thresholds. In particular, the CRB of the optimal mixed-precision arrangement is almost  $10$ dB lower than others, when the SNR is large. Hence, it is inappropriate to take a random mixed-precision arrangement given the number of high-precision and one-bit ADCs, which may cause a large performance loss especially in the high SNR regime.
    
\begin{figure}[htbp]
\centering
\subfigure[CRB versus $N$, $\text{SNR}=-20$ dB]{
\begin{minipage}[t]{0.5\linewidth}
\centering
\includegraphics[width=1.6in]{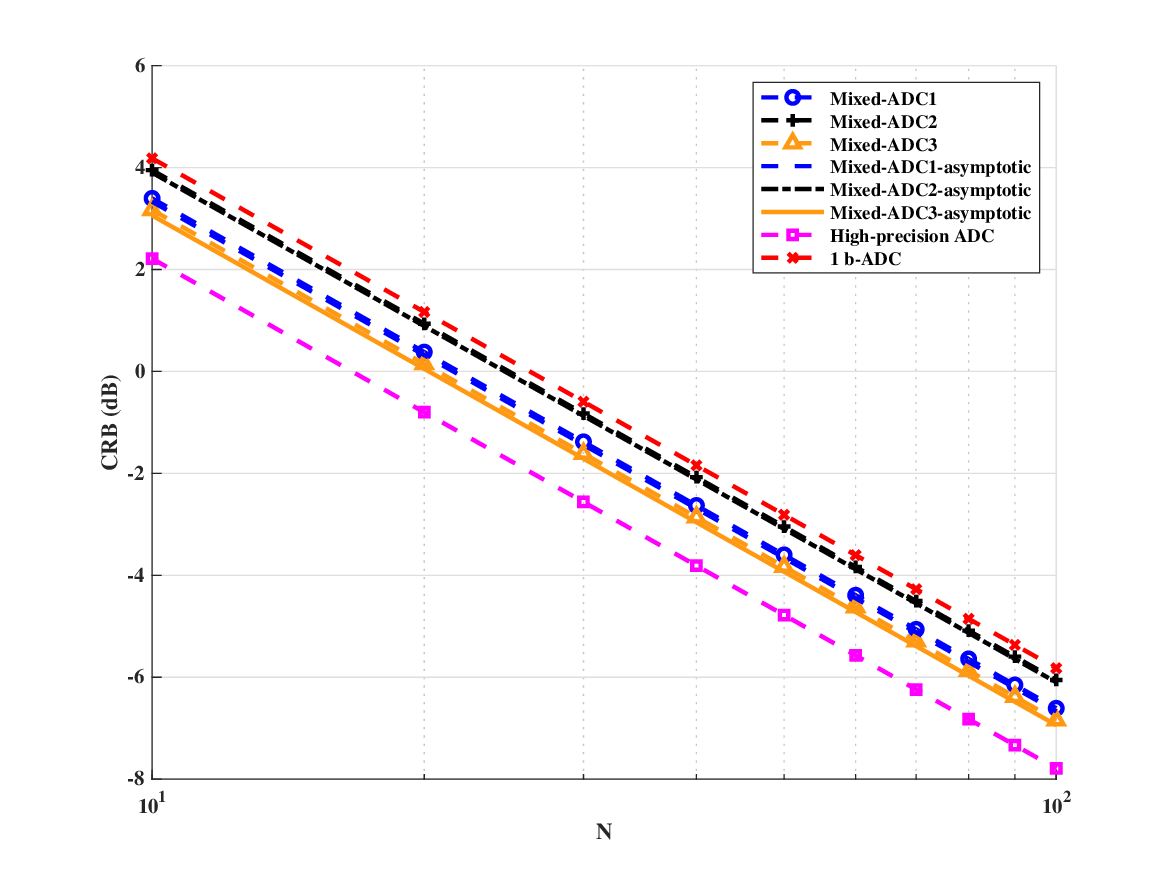}
\label{fig:1}
\end{minipage}%
}%
\subfigure[CRB versus SNR, $N=10$]{
\begin{minipage}[t]{0.5\linewidth}
\centering
\includegraphics[width=1.6in]{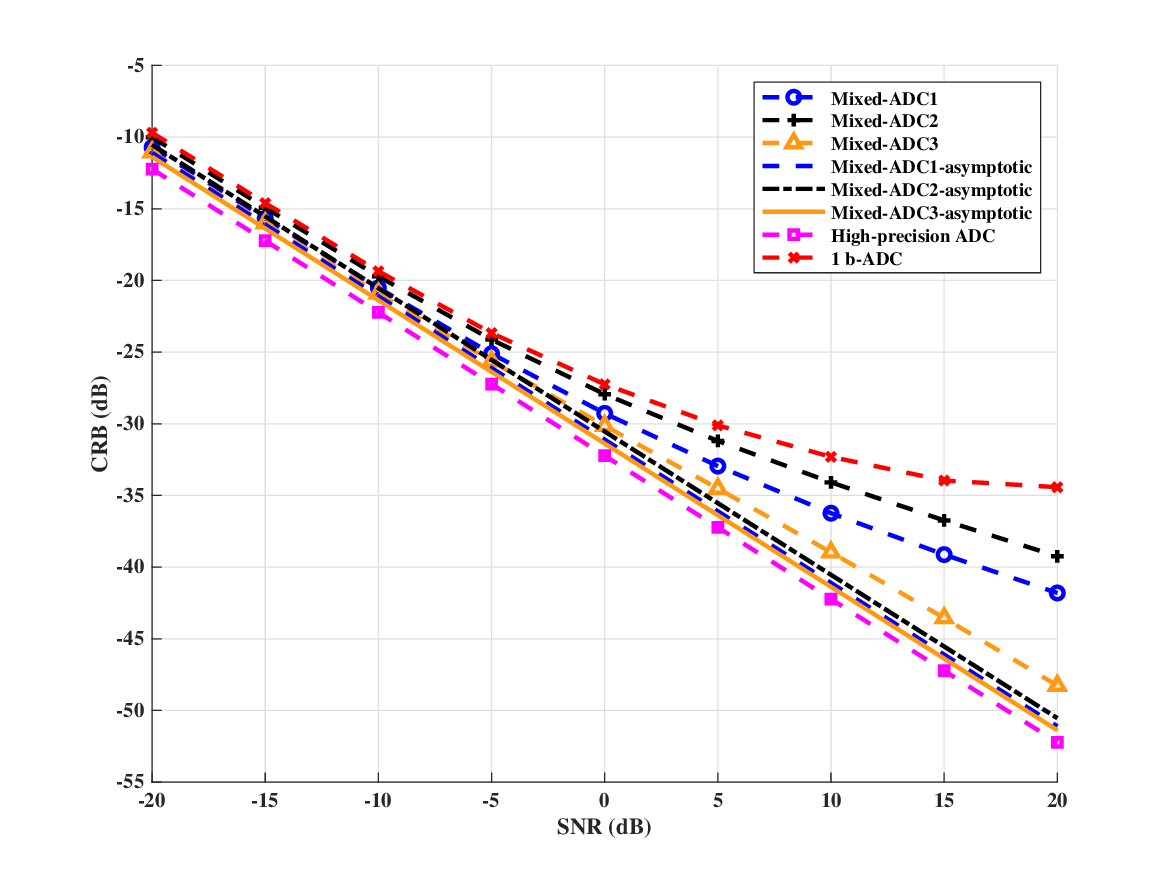}
\label{fig:2}
\end{minipage}%
}%
\centering
\caption{CRB versus $N$ and SNR}
\label{fig}
\end{figure}


\section{Conclusions}
In this work, we have considered the mixed-ADC based architecture for DOA estimation using ULA. We have derived the CRB associated with DOA  under the optimal time-vary threshold. We found the asymptotic CRB is related to the arrangement of high-precision and one-bit ADC. Based on it, we have proved that the high-precision ADCs should be distributed evenly around the edges of ULA to achieve a lower CRB. It can be extended to more general case where the ULA is equipped with various precision ADCs.

\vfill\pagebreak
\bibliographystyle{IEEEbib}
\bibliography{main}

\end{document}